\newtheorem{theo}{Theorem}
\newtheorem{lem}{Lemma}
\newcommand{\ol}[1]{\overline{#1}}
\title{{\Large
{Agglomeration and welfare of the Krugman model\\in a continuous space}
}}
\author{Kensuke Ohtake\thanks{School of General Education, Shinshu University, Matsumoto, Nagano 390-8621, Japan,
E-mail: k\_ohtake@shinshu-u.ac.jp}}
\date{February 14, 2023}
\begin{document}
\maketitle

\begin{abstract}
Two spatial equilibria, agglomeration, and dispersion, in a continuous space core-periphery model, are examined to discuss which equilibrium is socially preferred. It is shown that when transport cost is lower than a critical value, the agglomeration equilibrium is preferable in the sense of Scitovszky criterion, while when the transport cost is above the critical value, the two equilibria can not be ordered in the sense of Scitovszky criterion.
\end{abstract} 

\noindent
{\bf Keywords:\hspace{1mm}}
compensation principle;
continuous racetrack economy;
core-periphery model;
economic agglomeration;
new economic geography;
self-organization;
transport cost;
welfare economics

\noindent
{\small {\bf JEL classification:} R12, R40, D60}

\section{Introduction}
\citet{CharGaigNicoThi2006} discuss the welfare economic properties of agglomeration and dispersion equilibrium in the two-regional core-periphery model\footnote{\citet[Section 8.3]{FujiThis} outline \citet{CharGaigNicoThi2006}.}, which has been introduced by \citet{Krug91}\footnote{\citet{PflSu2008} use a quasi-linear utulity model to analyze welfare of two-regional spatial economy.}. One of their main results is that agglomeration can be preferred over dispersion in the sense of Scitovszky criterion \citep{Scit1941} if the transport cost is sufficiently low.

In this paper, we apply the method devised by \citet{CharGaigNicoThi2006} to Krugman's core-periphery model, which is characterized in particular by the CES function and iceberg transport cost\footnote{Our results should depend on these assumptions, which we discuss in Section \ref{cd}. Additionally, the transport cost only for manufacture is also an important assumption. See Footnote 6 at Section \ref{eaw} on this matter.}, in a continuous periodic space\footnote{The core-periphery model on a continuous periodic space has been introduced by \citet[Chapter 6]{FujiKrugVenab}. \citet{OhtaYagi_point} review the behavior of solutions of the model, and discuss its economic implications. \citet{TabaEshiSakaTaka} have onducted a rigorous study of the model on $n$-dimensional continuous space based on real analysis.} and compare the agglomeration and dispersion equilibrium. From the perspective of self-organization, the periodic space assumption is made to focus purely on the self-organizing mechanisms inherent in economy and to eliminate spatial asymmetries caused by some boundary conditions as stated by \citet[pp.22-23]{Krug}. In fact, there are many theoretical studies of the core-periphery model in some periodic or symmetric space such as \citet{Krug93}, \citet{Moss}, \citet{Anas04}, \citet{TabuThis}, \citet{CastCorrMoss}, \citet{AkaTakaIke}, \citet{IkeAkaKon}, \citet{IkeMuroAkaKoTa}, and \citet{GasCasCorr}. Our study is then an attempt to discuss the welfare propertiy, which is free from any specific spacial asymmetries, of the self-organizing mechanism. We prove that even in a periodic continuous space, also similar to \citet{CharGaigNicoThi2006}, agglomeration is preferred in the sense of Scitovszky criterion under sufficiently low transport cost.

Let us describe the model setup we use. The geographical space is assumed to be a circle $S$ of which the radius is $r\geq 1$. The industry consists of two sectors: manufacturing under the monopolistic competition and agriculture under the perfect competition. The manufacturing produces various differentiated goods with the increasing returns, while the agriculture produces one variety of homogeneous good with the constant returns. The supply of each variety of the manufacturing goods is normalized to be $\mu\in(0,1)$ as in \citep[Chapter 4, (4.33)]{FujiKrugVenab}. There are two different types of workers for each sector. Manufacturing workers whose total population is $\mu\in(0,1)$ are assumed to move around $S$ in search of higher real wages, while agricultural workers whose total population is $1-\mu$ are assumed to be unable to move. The transportation of the manufacturing goods incurs the iceberg transport cost. That is, to deliver one unit of any one variety of the manufacturing goods, $T(x,y)\geq 1$ units of it must be shipped from $x\in S$ to $y\in S$. Meanwhile, the transportation of the agricultural good does not incur any transport cost.

The model which we use is described as follows\footnote{The fifth equation of \eqref{1} represents the movement of the manufacturing workers, and $v>0$ denotes the speed of adjustment. However, since comparative statics is the main subject, the differential equation is not explicitly considered in this paper.}.
\begin{equation}\label{1}
\left\{ \begin{aligned}
&Y(t,x) = \mu\lambda(t,x)w(t,x) + (1-\mu)\phi(x),\\
&w(t,x) = \left[\int_S Y(t,y)G(t,y)^{\sigma-1}e^{-(\sigma-1)\tau |x-y|}dy\right]^{\frac{1}{\sigma}},\\
&G(t,x) = \left[\int_S \lambda(t,y)w(t,y)^{1-\sigma}e^{-(\sigma-1)\tau |x-y|}dy
\right]^{\frac1{1-\sigma}},\\
&\omega(t,x) = w(t,x)G(t,x)^{-\mu},\\
&\frac{\partial\Lambda}{\partial t}(t,x)
 = v \left[\omega(t,x)-\int_S \omega(t,y)\lambda(t,y)dy\right]
\lambda(t,x),
\end{aligned}  \right.
\end{equation}
with an initial condition $\lambda(0,x) = \lambda_0(x)$. For any function $f$ on $S$, we denote the integration of $f$ over any subset $\Sigma\subset S$ by $\int_\Sigma f(x)dx$, where $dx$ is a line element\footnote{Names of variables on $S$ can be $x$, $y$, $z$, and so on, as the case may be.}. The functions $Y(t,x), w(t,x), G(t,x)$ and $\omega(t,x)$ represent the income, manufacturing nominal wage, manufacturing price index, and manufacturing real wage at time $t\geq 0$ in region $x\in S$, respectively. Each of the nominal wage of the agricultural workers and the price of the agricultural good is assumed to be one. The manufacturing population density is given by $\mu\lambda(t,x)$ at $t\geq 0$ in $x\in S$, and the function $\lambda$ must satisfy $\int_S\lambda(t,x)dx=1,~\forall t\geq 0$. Similarly, the agricultural population density is given by $(1-\mu)\phi(x)$ in $x\in S$, and the function $\phi$ must satisfy $\int_S\phi(x)dx=1$. The iceberg transportation is assumed by $T(x,y)=e^{\tau|x-y|}$, where $\tau>0$. Here, $|x-y|$ denotes the shorter distance between $x,y\in S$. The parameter $\sigma > 1$ denotes the consumer's preference for manufacturing variety; the closer $\sigma$ is to one, the stronger the degree to which consumers prefer diversity of the manufacturing goods. Since the parameters $\tau$ and $\sigma$ often appear in the form $\tau(\sigma-1)$, it is convenient to introduce 
\[
\alpha :=\tau(\sigma-1)>0
\]
in the following. 

Any $x\in S$ corresponds one-to-one to a certain $\theta\in[-\pi,\pi)$ as $x=x(\theta)$, so it is convenient for specific calculations to use $\theta$ as the coordinates put into $S$. Then, when $x=x(\theta)$ and $y=y(\tilde{\theta})$, the shorter distance $|x-y|$ between them is computed by $\min\left\{r|\theta-\tilde{\theta}|_{\rm abs}, 2\pi r-r|\theta-\tilde{\theta}|_{\rm abs}\right\}$ where $|\cdot|_{\rm abs}$ denotes the absolute value.

\section{Equilibrium and welfare}\label{eaw}
In this section, we first give precise definitions for the two equilibria\footnote{For the two equilibrium, agglomeration and dispersion, it is well known that agglomeration (resp. dispersion) is stable when the transport cost is sufficiently low (resp. high) in the original Krugman model. On the other hand, in models that incorporate more complex cost structures, the dispersion equilibrium could stable not only when the transport cost is high but also when the transport cost is low (re-dispersion). For example, \citet[Chapter 7]{FujiKrugVenab} have given a model that introduces differentiated agriculture and its transport cost (\citet{Ohta2022spa} discusses its  continuous periodic version.), and \citet{MuraThi} and \citet{TakaIkeThi20} have discussed models that incorporate commuting costs.}. We then discuss whether compensation is possible in each equilibrium, i.e., whether each equilibrium is potentially Pareto-dominant over the other.

\subsection{Agglomeration equilibrium}
We first consider the equilibrium in which all the manufacturing workers concentrate in $x^*\in S$. We call this equilibrium (A) for short. It can be expressed as
\begin{equation}\label{deltaonS}
\lambda(x) = \delta^S(x-x^*)
\end{equation}
by using the delta function on $S$. Here, the delta function on $S$ is a linear functional satisfying
\[
\begin{aligned}
&\int_S \delta^S(x-x^*) dx=1,\\
&\int_S \delta^S(x-x^*)f(x)dx = f(x^*),
\end{aligned}
\]
where $f$ is any continuous function on $S$. By substituting \eqref{deltaonS} into the first three equations of \eqref{1}, one can see that
\begin{align}
&w(x^*) =G(x^*) = 1, \label{AwG1}\\
&G(x) = e^{\tau|x-x^*|},~x\in S. \label{Apr}
\end{align}
Then, from the fourth equation of \eqref{1}, the manufacturing real wage in $x^*$ is given by
\begin{equation}
\omega(x^*) = 1. \label{Amr}
\end{equation}

\subsection{Dispersion equilibrium}
We next consider the equilibrium in which the manufacturing workers 
are evenly distributed throughout $S$, and the nominal wages and the price indices of all regions are homogeneous. We call this equilibrium (D) for short. By substituting 
\[
\begin{aligned}
&\lambda(x) \equiv \ol{\lambda} = \frac{1}{2\pi r},\hspace{3.3mm}\forall x\in S,\\
&w(x) \equiv \ol{w},\hspace{14mm}\forall x\in S,\\
&G(x) \equiv \ol{G},\hspace{13.6mm}\forall x\in S.
\end{aligned}
\]
into the first three equations of \eqref{1}, one can obtain 
\begin{align}
&\ol{w} = 1,\nonumber\\
&\ol{G} = \left[\frac{1-e^{-\alpha\pi r}}{\alpha\pi r}\right]^{\frac{1}{1-\sigma}}. \label{Dpr}
\end{align}
Then, from the fourth equation of \eqref{1}, the manufacturing real wage also becomes homogeneous as 
\begin{equation}
\ol{\omega} = \left[\frac{1-e^{-\alpha\pi r}}{\alpha\pi r}\right]^{\frac{\mu}{\sigma-1}}. \label{Dmr}
\end{equation}

\subsection{Comparison of welfare levels}
\subsubsection{Welfare levels at (A)}
Let $\omega^A$, $\psi^A(x)$ denote levels of welfare in (A) of the manufacturing workers and the agricultural workers in $x\in S$, respectively. The level of welfare is measured by real wages of workers. Therefore, from \eqref{Amr} $\omega^A=1$, and from \eqref{Apr} $\psi^A(x)=e^{-\mu\tau|x-x^*|}$.  Note that the agricultural workers at $x^*$ enjoy the same level of welfare as the manufacturing workers because $\psi^A(x^*)=1=\omega^A$. The results are summarized in Table \ref{tab:aggwel}. 

\begin{table}[H]
\centering
  \begin{tabular}{|l|c|}\hline
  \multicolumn{1} {|c|} {Type of workers} & {Welfare} \\ \hline\hline
  Manufacturing workers & $1$\\ \hline
  Agricultural workers at $x^*$ & $1$\\ \hline
  Agricultural workers at $x$ & $e^{-\mu\tau|x-x^*|}$\\ \hline
  \end{tabular}
  \caption{Levels of welfare at (A)}
  \label{tab:aggwel}
\end{table}

\subsubsection{Welfare levels at (D)}
Let $\omega^D$, $\psi^D$ denote levels of welfare in (D), which is measured by the real wages, of the manufacturing workers and the agricultural workers, respectively. From \eqref{Dmr}, $\omega^D=\left[\frac{1-e^{-\alpha r\pi}}{\alpha r \pi}\right]^{\frac{\mu}{\sigma-1}}$, and from \eqref{Dpr}, $\psi^D=\left[\frac{1-e^{-\alpha r\pi}}{\alpha r \pi}\right]^{\frac{\mu}{\sigma-1}}$. Note that all the agricultural workers enjoy the same level of welfare as the manufacturing workers. The results are summarized in Table \ref{tab:diswel}. 

\begin{table}[H]
\centering
  \begin{tabular}{|l|c|}\hline
  \multicolumn{1} {|c|} {Type of workers} & {Welfare} \\ \hline\hline
  Manufacturing workers & $\left[\frac{1-e^{-\alpha r\pi}}{\alpha r \pi}\right]^{\frac{\mu}{\sigma-1}}$\\ \hline
  Agricultural workers & $\left[\frac{1-e^{-\alpha r\pi}}{\alpha r \pi}\right]^{\frac{\mu}{\sigma-1}}$\\ \hline
  \end{tabular}
  \caption{Levels of welfare at (D)}
  \label{tab:diswel}
\end{table}

\begin{theo}\label{omDpsDlt1}
The real wage of the workers in {\rm (D)} is lower than $1$:
\[\omega^D (=\psi^D) < 1.\]
\end{theo}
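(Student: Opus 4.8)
The plan is to reduce the claim to a single elementary inequality about the exponential function. First I would abbreviate $a := \alpha r \pi$, which is strictly positive since $\alpha = \tau(\sigma-1) > 0$ and $r \geq 1$. Because the exponent $\mu/(\sigma-1)$ is strictly positive (as $\mu \in (0,1)$ and $\sigma > 1$), the map $t \mapsto t^{\mu/(\sigma-1)}$ is strictly increasing on $(0,\infty)$. Hence the desired inequality $\omega^D < 1$ is equivalent to the statement that the base satisfies $\frac{1-e^{-a}}{a} < 1$. So it suffices to prove $1 - e^{-a} < a$ for every $a > 0$, together with the (automatic) fact that $1 - e^{-a} > 0$, which guarantees the base is positive and the power map is well defined.

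Next I would establish $1 - e^{-a} < a$, equivalently $e^{-a} > 1 - a$. This follows from the strict convexity of the exponential: the tangent line to $t \mapsto e^{t}$ at $t = 0$ is $t \mapsto 1 + t$, and $e^{t} > 1 + t$ for all $t \neq 0$. Substituting $t = -a$ with $a > 0$ yields $e^{-a} > 1 - a$, i.e. $1 - e^{-a} < a$, and dividing by $a > 0$ gives $\frac{1-e^{-a}}{a} < 1$. Raising both sides to the positive power $\mu/(\sigma-1)$ preserves the strict inequality, producing $\omega^D = \psi^D < 1$, where the equality $\omega^D = \psi^D$ is already recorded in the statement and needs no separate argument.

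I do not anticipate any genuine obstacle here, since the entire content is the standard bound $e^{t} > 1 + t$. The only points requiring a sentence of care are confirming that the base $\frac{1-e^{-a}}{a}$ lies in $(0,1)$ rather than merely below $1$ (so the monotone power map applies), and observing that the result is uniform in all admissible parameter values $\tau, \sigma, r, \mu$ because it depends only on the sign conditions $a > 0$ and $\mu/(\sigma-1) > 0$.
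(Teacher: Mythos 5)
Your proposal is correct. Both you and the paper reduce the theorem to the same scalar inequality $\frac{1-e^{-X}}{X}<1$ for $X:=\alpha r\pi>0$ (the paper leaves implicit the step you spell out, that raising to the positive power $\mu/(\sigma-1)$ preserves the inequality), but you prove that inequality differently. The paper splits into two cases: for $X\geq 1$ it declares the bound obvious (indeed $1-e^{-X}<1\leq X$), and for $X<1$ it expands $e^{-X}$ in a Maclaurin series and shows the tail is negative by pairing consecutive terms and checking $\frac{X^{k+1}}{(k+1)!}<\frac{X^{k}}{k!}$ term by term, which is where the hypothesis $X<1$ is used. You instead invoke the tangent-line (convexity) bound $e^{t}>1+t$ for $t\neq 0$ at $t=-X$, which gives $1-e^{-X}<X$ for all $X>0$ in one stroke, with no case distinction and no series manipulation. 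Your route is shorter and uniform in $X$; the paper's is more computational but entirely self-contained at the level of power series. Both are valid, and your added remark that the base lies in $(0,1)$ so the monotone power map applies is a point the paper glosses over.
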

See Appendix for the proof. 

Therefore, it is immediate that
\[
\begin{aligned}
&\omega^D < 1 = \omega^A,\\
&\psi^D < 1 = \psi^A(x^*).
\end{aligned}
\]
This means that all the manufacturing workers and the agricultural workers in $x^*\in S$ prefer (A) over (D). Meanwhile, as for the agricultural workers in $S\setminus\left\{x^*\right\}$, there exists a neighbor of $x^*$ denoted by $\Gamma\subset S$ such that
\[
\psi^A(x) \geq \psi^D,~\forall x\in \Gamma,
\]
and
\[
\psi^A(x) < \psi^D,~\forall x\notin \Gamma.
\]
See Figure \ref{fig:Gamma} for the situation. 
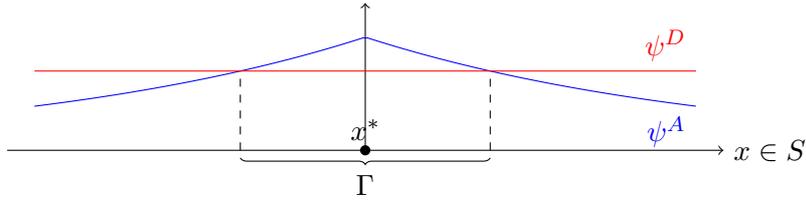
\begin{figure}[H]
\centering
\begin{tikzpicture}[scale=1.5]
 \draw[->,thin] (-3.14,0)--(3.14,0)node[right]{$x\in S$};
 \draw[->,thin] (0,0)--(0,1.3)node[right]{$$};
 \draw (0,0) node[above]{$x^*$};
 \fill (0,0) circle(0.045);
 \draw[name path=pA,samples=140,blue,thin,domain=-2.9:2.9] plot(\x,{exp(-0.325*abs(\x))})node[below left]{$\psi^A$};
 \draw[name path=pD,samples=2,red,thin,domain=-2.9:2.9] plot(\x,{pow((1-exp(-1.3*3.14))/(1.3*3.14), 0.5/(2.0))})node[above left]{$\psi^D$};
 \path[name intersections={of=pA and pD}];
 \coordinate (L) at (intersection-1);
 \coordinate (R) at (intersection-2);
 \draw[dashed]($(-3.14,0)!(L)!(3.14,0)$)node[below]{}--(L);
 \draw[dashed]($(-3.14,0)!(R)!(3.14,0)$)node[below]{}--(R);
 \draw[thin,decorate,decoration={brace,mirror,raise=1mm}]($(-3.14,0)!(L)!(3.14,0)$)--($(-3.14,0)!(R)!(3.14,0)$)node[midway,below=2mm]{$\Gamma$};
\end{tikzpicture}
\caption{Sketch of the graph of $\psi^A$ and $\psi^D$}
\label{fig:Gamma}
\end{figure}

\noindent
This means that on the one hand, the agricultural workers in $x\in \Gamma$ prefer (A) over (D); on the other hand, the agricultural workers in $x\notin \Gamma$ prefer (D) over (A).

In the following, we set $x^*=x(0),~ 0\in [-\pi,\pi)$ without loss of generality. Then, $\Gamma\subset S$ corresponds to an interval $[-\Delta, \Delta]\subset [-\pi, \pi]$ where $\Delta>0$. These notations are essential when computing specific integrals.

\subsection{Compensation scheme}

\subsubsection{Determination of nominal compensation}
Under (A), assume that the manufacturing workers and the agricultural workers in $\Gamma$ compensate to the agricultural workers in $S\setminus\Gamma$. Let $C_A(x-x^*)$ denote the compensation received per agricultural workers in $S\setminus\Gamma$. Then, their welfare level after receiving the compensation is given by
\[
\left(1 + C_A(x-x^*)\right)e^{-\mu\tau|x-x^*|},
\]
and it must be equal to their welfare level under (D):
\[
\psi^D = \left[\frac{1-e^{-\alpha r\pi}}{\alpha r\pi}\right]^{\frac{\mu}{\sigma-1}}.
\]
Therefore, the compensation is calculated as
\[
C_A(x-x^*) = \psi^D e^{\mu\tau|x-x^*|} - 1,
\]
and the total compensation is calculated by
\begin{equation}\label{totalcomp}
\int_{S\setminus\Gamma}C_A(x-x^*)(1-\mu)\ol{\phi}dx
=
\int_{S\setminus\Gamma}\left\{\psi^De^{\mu\tau|x-x^*|} - 1\right\}(1-\mu)\ol{\phi}dx
\end{equation}
The manufacturing workers and agricultural workers in $\Gamma$ pay $T_A^*$ and $T_A(x-x^*)$ per person, respectively, to compensate the agricultural workers in $S\setminus\Gamma$. Since the total payment is equal to the total compensation, 
\begin{equation}\label{tteqtc}
\mu T_A^* + (1-\mu)\ol{\phi}\int_\Gamma T_A(x-x^*)dx 
= (1-\mu)\ol{\phi}\int_{S\setminus \Gamma} C_A(x-x^*)dx
\end{equation}
holds.

\subsubsection{Balance of supply and demand.}
The demand from region $x$ for one variety of manufacturing goods produced in region $x^*$ is given by\footnote{\citet[p.50, (4.16)]{FujiKrugVenab}}
\begin{equation}\label{demand}
\mu Y(x)\left(p(x^*)e^{\tau|x-x^*|}\right)^{-\sigma}G(x)^{\sigma-1},
\end{equation}
where $p(x)$ denotes the f.o.b price \citep[p.49]{FujiKrugVenab} of the manufacturing goods produced in $x\in S$. Since $p(x^*)=w(x^*)$ from normalization in \citep[ (4.30)]{FujiKrugVenab}, and we now have \eqref{AwG1}, so \eqref{demand} becomes
\[
\mu Y(x)e^{-\sigma\tau|x-x^*|}G(x)^{\sigma-1}.
\]
Under the assumption of the iceberg transport cost, $e^{\tau|x-y|}$ times this amount has to be shipped to meet this demand. Thus, the aggregate demand for one variety of manufacturing goods produced in region $x^*$ is given by
\begin{equation}\label{aggdemandA}
\begin{aligned}
q(x^*)
&= \int_S \mu Y(x) G(x)^{\sigma-1}e^{-\alpha|x-x^*|}dx.
\end{aligned}
\end{equation}
The income after the compensation is made under (A) is given by
\begin{equation}\label{Yaftercomp}
Y(x)  = \left\{
\begin{aligned}
&\mu\delta^S(x-x^*)\left(1-T_A^*\right)\\
&\hspace{10mm}+(1-\mu)\ol{\phi}\left(1-T_A(x-x^*)\right),\hspace{3mm}\forall x\in \Gamma,\\
&(1-\mu)\ol{\phi}\left(1+C_A(x-x^*)\right),\hspace{17.5mm}\forall x\in S\setminus \Gamma.
\end{aligned}\right.
\end{equation}
Substituting \eqref{Yaftercomp} into \eqref{aggdemandA} yields
\[
\begin{aligned}
q(x^*)&= \mu -\mu^2 T_A^* \\
&\hspace{10mm}-\mu(1-\mu)\ol{\phi}\int_\Gamma T_A(y-x^*)dy\\
&\hspace{10mm}+\mu(1-\mu)\ol{\phi}\int_{S\setminus\Gamma}C_A(y-x^*)dy.
\end{aligned}
\]
Applying \eqref{tteqtc} to this gives
\begin{equation}\label{qsmu}
q(x^*) = \mu
\end{equation}
which means the demand for a variety of the manufacturing goods is equal to its supply $\mu$. Thus, it is possible to implement such compensation without disturbing the equilibrium price.

\subsubsection{Condition under which agglomeration is preferred}
For (A) to be preferred, the following two conditions must be satisfied.
\begin{enumerate}
\item The welfare level of the manufacturing workers in (A) must be higher than that in (D) even after compensation, i.e., 
\begin{equation}\label{1mTsAgtpsiD}
1-T^*_A > \psi^D.
\end{equation}
\item The welfare level of the agricultural workers in $\Gamma$ after compensation must be higher than or equal to that in (D), i.e., 
\begin{equation}\label{agrcmpgth}
\left(1-T_A(x-x^*)\right)e^{-\mu\tau|x-x^*|} \geq \psi^D,\quad\forall x\in \Gamma.
\end{equation}
\end{enumerate}

In the following, we consider a compensation scheme in which the agricultural workers in $\Gamma$ exhaust all the surplus generated by the agglomeration. In this case, \eqref{agrcmpgth} holds as
\begin{equation}\label{agrcmpgtheqal}
T_A(x-x^*) = 1 - \psi^De^{\mu\tau|x-x^*|}.
\end{equation}
Considering \eqref{agrcmpgtheqal}, one can find the total payment by the manufacturing workers and the agricultural workers in $\Gamma$ becomes
\begin{equation}\label{totalpay}
\mu T_A^* 
+ (1-\mu)\ol{\phi}\int_\Gamma \left(1 - \psi^D e^{\mu\tau|x-x^*|}\right)dx 
\end{equation}
Since \eqref{totalpay} equals to the total compensation \eqref{totalcomp}, we see
\[
T_A^*
= \frac{1-\mu}{\mu}\left[\psi^D\cdot\ol{\phi}\int_S e^{\mu\tau|x-x^*|}dx-1\right].
\]

We can show \eqref{1mTsAgtpsiD}, i.e., $F(\tau):=1-T_A^* - \psi^D>0$ holds for sufficiently small values of $\tau>0$. See Appendix for the proof.
\begin{theo}\label{Fgtzero}
There exists $\tau_K > 0$ such that $F(\tau) > 0$ for any $\tau \in (0, \tau_K)$.
\end{theo}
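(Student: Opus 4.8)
The plan is to collapse $F(\tau)=1-T_A^*-\psi^D$ into a single explicit function of $\tau$, show that $F(0^+)=0$, and then verify that the one-sided derivative at $\tau=0$ is strictly positive; positivity on a right neighborhood of $0$ follows at once by continuity. First I would evaluate the only integral entering $T_A^*$. With $x^*=x(0)$, line element $dx=r\,d\theta$, and shorter distance $r|\theta|$ on $[-\pi,\pi]$, one gets
\[
\ol{\phi}\int_S e^{\mu\tau|x-x^*|}dx=\frac{1}{2\pi r}\int_{-\pi}^{\pi}e^{\mu\tau r|\theta|}r\,d\theta=\frac{e^{\mu\pi r\tau}-1}{\mu\pi r\tau}.
\]
Substituting this into the expression for $T_A^*$ and using $1+\tfrac{1-\mu}{\mu}=\tfrac1\mu$ yields the compact form
\[
F(\tau)=\frac1\mu\left[\,1-\psi^D\Big((1-\mu)\frac{e^{\mu\pi r\tau}-1}{\mu\pi r\tau}+\mu\Big)\right],
\]
so that the sign of $F$ is governed by the bracketed quantity.

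Next I would record the analytic structure. Introducing $g(u)=\tfrac{1-e^{-u}}{u}$ and $h(u)=\tfrac{e^{u}-1}{u}$, both of which extend to entire functions with $g(0)=h(0)=1$, we have $\psi^D=g\big((\sigma-1)\pi r\tau\big)^{\mu/(\sigma-1)}$ and the bracketed factor equals $Q(\tau):=(1-\mu)h(\mu\pi r\tau)+\mu$. Since $g$ is analytic and positive near the origin, $\psi^D$ is a real-analytic positive function of $\tau$ in a neighborhood of $0$, and hence so is $F$. Evaluating at $\tau=0$ gives $\psi^D(0)=1$ and $Q(0)=(1-\mu)+\mu=1$, whence $F(0)=\tfrac1\mu[1-1]=0$. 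Thus the theorem cannot be read off from the value at $0$ alone, and the slope must be computed.

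Finally I would differentiate. Writing $F=\tfrac1\mu[1-\psi^D Q]$, we have $F'=-\tfrac1\mu\big[(\psi^D)'Q+\psi^D Q'\big]$. Logarithmic differentiation of $\psi^D$ together with $g'(0)=-\tfrac12$ gives $(\psi^D)'(0)=\tfrac{\mu}{\sigma-1}\cdot\big(-\tfrac12\big)\cdot(\sigma-1)\pi r=-\tfrac{\mu\pi r}{2}$, where the factor $(\sigma-1)$ cancels; using $h'(0)=\tfrac12$ gives $Q'(0)=(1-\mu)\cdot\tfrac12\cdot\mu\pi r=\tfrac{(1-\mu)\mu\pi r}{2}$. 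With $\psi^D(0)=Q(0)=1$ this produces
\[
F'(0)=-\frac1\mu\left[-\frac{\mu\pi r}{2}+\frac{(1-\mu)\mu\pi r}{2}\right]=\frac{\mu\pi r}{2}>0.
\]
Since $F$ is continuously differentiable near $0$ with $F(0)=0$ and $F'(0)>0$, there exists $\tau_K>0$ with $F(\tau)>0$ for all $\tau\in(0,\tau_K)$, as claimed. The only real obstacle is the bookkeeping for the removable singularities in $g$ and $h$ and the verification that the first-order coefficient is nonzero, which is what allows the higher-order terms to be ignored; I expect the clean cancellation of $(\sigma-1)$ to leave the threshold-free leading slope $\mu\pi r/2$, confirming that agglomeration is preferred for small transport cost.
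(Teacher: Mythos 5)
Your proposal is correct and follows essentially the same route as the paper: the paper's proof also reduces to showing $\lim_{\tau\to 0}F(\tau)=0$ together with $F'(\tau)>0$ for small $\tau$, via the limits $\lim_{\tau\to 0}\psi^D=\lim_{\tau\to 0}\chi=1$ and $\lim_{\tau\to 0}(\psi^D)'=-\mu r\pi/2$, $\lim_{\tau\to 0}\chi'=\mu r\pi/2$, which match your Taylor coefficients exactly and yield the same leading slope $\mu\pi r/2$. The only difference is that the paper additionally establishes $F''<0$ and $\lim_{\tau\to\infty}F<0$, but these serve the subsequent ``if and only if $\tau<\tau_K$'' claim rather than the theorem as stated, so your leaner argument suffices for the statement itself.
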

Therefore, we can claim that (A) is preferred over (D) in the sense of Kaldor criterion \citep{Kaldor1939} if and only if $\tau < \tau_K$.

\subsubsection{Compensation under (D) disturbs equilibrium price}\label{subsec:Hikcs_(A)>(D)}
Contrary to the previous discussion, consider compensation under (D) from the agricultural workers in $S\setminus\Gamma$, who gain more at (D) than at (A), to the manufacturing workers and the agricultural workers in $\Gamma$, who prefer (A).

In this case, let $T_D(x-x^*)$ be the amount paid by one agricultural worker in $S\setminus \Gamma$. This amount generally depends on the distance $x-x^*$ from the potential city in $x^*$, since those who live farther away from $x^*$ will suffer greater losses from the agglomeration and will therefore be prepared to pay more to prevent it. 

Meanwhile, let $C_D^*$ and $C_D(x-x^*)$ be the amount of compensation received by a manufacturing worker and an agricultural worker in $\Gamma$, respectively. The latter generally depends on the distance $x-x^*$ from the potential city in $x^*$, since the closer a player lives to $x^*$, the greater the gain he enjoys under (A), and he will not be satisfied unless he receives a larger compensation. 

The income distribution after compensation under (D) is given by
\begin{equation}\label{D_Y}
Y(x)  = \left\{
\begin{aligned}
&\mu\ol{\lambda}(1+C_D^*)+(1-\mu)\ol{\phi}\left(1+C_D(x-x^*)\right),~\forall x\in \Gamma,\\
&\mu\ol{\lambda}(1+C_D^*)+(1-\mu)\ol{\phi}\left(1-T_D(x-x^*)\right),~\forall x\in S\setminus \Gamma.
\end{aligned}\right.
\end{equation}
The total payment and the total compensation must be equal, so
\begin{equation}\label{D_tteqtc}
\begin{aligned}
(1-\mu)\ol{\phi}\int_{S\setminus\Gamma}T_D(x-x^*)dx
=
\mu C_D^* + (1-\mu)\ol{\phi}\int_\Gamma C_D(x-x^*)dx
\end{aligned}
\end{equation}
holds.

Same as \eqref{aggdemandA}, the aggregate demand for one variety of manufacturing goods produced in region $x$ is given by
\begin{equation}\label{aggdemandD}
\begin{aligned}
q(x)
&= \int_S \mu Y(y) G(y)^{\sigma-1}e^{-\alpha|y-x|}dy.
\end{aligned}
\end{equation}
Using the fact that
\[
\ol{G}^{\sigma-1}\int_{S}e^{-\alpha|x-y|}dy = 2\pi r
\]
and substituting \eqref{D_Y} into \eqref{aggdemandD} gives
\begin{equation}\label{qxD}
\begin{aligned}
q(x) 
&= \mu + \mu^2 C_D^* \\
&\hspace{7mm} + \ol{G}^{\sigma-1}\mu(1-\mu)\ol{\phi}\int_\Gamma C_D(y-x^*)e^{-\alpha|x-y|}dy \\
&\hspace{7mm}- \ol{G}^{\sigma-1}\mu(1-\mu)\ol{\phi}\int_{S\setminus\Gamma} T_D(y-x^*)e^{-\alpha|x-y|}dy.
\end{aligned}
\end{equation}
Let us especially consider $q(x^*)$. Then, the integral terms in \eqref{qxD} are estimated as
\begin{equation}\label{estintCD}
\int_\Gamma C_D(y-x^*) e^{-\alpha|x^*-y|}dy
> e^{-\alpha r\Delta}\int_\Gamma C_D(y-x^*) dy
\end{equation}
and 
\begin{equation}\label{estintTD}
\int_{S\setminus\Gamma} T_D(y-x^*) e^{-\alpha|x^*-y|}dy
< e^{-\alpha r\Delta}\int_{S\setminus\Gamma} T_D(y-x^*) dy,
\end{equation}
and note that 
\begin{equation}\label{estGsigminus1gt1}
\ol{G}^{\sigma-1} = \frac{\alpha \pi r}{1-e^{-\alpha \pi r}} > 1.
\end{equation}
These estimations \eqref{estintCD}-\eqref{estGsigminus1gt1} with \eqref{qxD} give
\begin{equation}\label{qxstarfinal}
\begin{aligned}
q(x^*) &> \mu + \mu^2 C_D^* \\
&\hspace{5mm} + \mu(1-\mu)\ol{\phi}e^{-\alpha r\Delta}\int_\Gamma C_D(y-x^*) dy \\
&\hspace{5mm} - \mu(1-\mu)\ol{\phi}e^{-\alpha r\Delta}\int_{S\setminus\Gamma} T_D(y-x^*) dy.
\end{aligned}
\end{equation}
Finally, applying \eqref{D_tteqtc} to \eqref{qxstarfinal}, we obtain
\[
q(x^*) = \mu + (1-e^{-\alpha r\Delta}) \mu^2 C_D^* > \mu
\]
which means excess demand for varieties of the manufacturing goods produced in $x^*$ occurs. Furthermore, since $q(x)$ is continuous as for $x$, there exists a neighborhood of $x^*$ such that the excess demand $q(x)>\mu$ occurs in any $x$ in the neighborhood. 

This result shows that such a compensation scheme is not feasible, since it cannot be implemented without disturbing the prices in (D). Therefore, (A) is preferred over (D) in the sense of Hicks \citep{Hicks1939} under any values of $\tau>0$.  

\section{Conclusion and discussion}\label{cd}
Based on the above discussion, we can confirm as follows that the main result of \citet{CharGaigNicoThi2006} holds for a periodic continuous space model. Subsection \ref{subsec:Hikcs_(A)>(D)} shows that (A) is preferred over (D) in the sense of Hicks for any value of $\tau>0$. Then, Theorem \ref{Fgtzero} claims that if $\tau<\tau_K$, then (A) is preferred over (D); therefore, when $\tau<\tau_K$, (A) is preferred over (D) in the sense of Scitovszky criterion. On the other hand, when $\tau\geq \tau_K$, there is no equilibrium that is preferred in the sense of both Hicks and Kaldor, and thus, in the sense of Scitovszky criterion, the preferability is undetermined.

One may point out that this result depends on specific assumptions on the model, i.e., the CES utility function and  the iceberg transport cost. Although it is difficult to provide a definite answer at this time, if these assumptions are not satisfied, the conclusion of this paper will not hold as it is, or even if it holds, it will be difficult to prove to some extent for the following reasons. 

First, as is well known, in the model with the CES function (CES model), a pro-competitive effect does not work as the number of firms increases, and thus the markup of price is constant. As \citet[p.52]{FujiKrugVenab} state, relaxing these assumption will lead to firms' pricing behavior based on variable markups of prices\footnote{More precisely, the core-periphery model uses the CES function and also assumes the non-strategic behavior of firms that they consider the price index as a given when maximizing profits \citep[p.51, p.52]{FujiKrugVenab}.}. Therefore, in a model with non-CES function (VES model), in the single point city $x^*$ where all firms are concentrated, competition will be more intense and the markup of price will be lower. The resulting deterioration in firms' sales could lead to a decline in the nominal wage that is funded by them. Since this effect could induce a decline in the real wage in $\Gamma$ and thus the real wage can not be enugh to compensate $S\setminus\Gamma$, so it is possible that (A) is not as socially preferred in the VES model as in the CES model\footnote{Meanwhile, if the price reduction through intenese competition is large enough, (A) may be socially preferred as much as or more than in the CES model. In any case, complex effects can arise that are not present in the CES model.}.

Second, when compensating $S\setminus\Gamma$ from $\Gamma$ in (A), owing to the unchanged demand as in \eqref{qsmu} for goods produced by the manufacturing firms located in $x^*$, the compensation does not perturb the equilibrium price. This might be due in large part to the iceberg transport cost. The possible explanation is as follows. The compensation decreases demand from $\Gamma$, while demand from $S\setminus\Gamma$ increases. The iceberg transport cost can then lead the firm to increase production for $S\setminus\Gamma$ to make up for the decrease in production induced by the decrease in demand from $\Gamma$, because under the iceberg transport cost, the firm needs to produce more goods to transport them farther.

\section{Appendix}

\subsection{Proof for Theorem \ref{omDpsDlt1}}

Let $X:=\alpha r\pi> 0$. When $X\geq 1$, it is obvious that 
\begin{equation}\label{1eXX1}
\frac{1-e^{-X}}{X} < 1.
\end{equation}
When $X<1$, by the Maclaurin expansion of $e^{-X}$, we see that \eqref{1eXX1} is equivalent to
\begin{equation*}
\sum_{k=2}^\infty \left\{\frac{X^{k+1}}{(k+1)!}-\frac{X^k}{k!}\right\}
< 0
\end{equation*}
To prove this, we only have to show that for each $k$
\begin{equation}\label{Xkplus1Xk}
\frac{X^{k+1}}{(k+1)!}<\frac{X^k}{k!} 
\end{equation}
holds. Actually, from the assumption $X<1$, we have $X^{k+1}<X^k$ because
\[
X^{k+1}-X^k=X^k\left(X-1\right) < 0.
\]
Hence, together with the fact $(k+1)!>k!$, we obtain \eqref{Xkplus1Xk} for each $k$.
\qed

\subsection{Proof for Theorem \ref{Fgtzero}}
Tedious calculations yield the following Lemmas \ref{lemm:Hs}-\ref{lemm:Is}.
\begin{lem}\label{lemm:Hs}
For any $\tau>0$, the function $\psi^D=\psi^D(\tau)$ given by
\[
\psi^D=\left[\frac{1-e^{-\alpha r\pi}}{\alpha r\pi}\right]^{\frac{\mu}{\sigma-1}},\quad \alpha:=(\sigma-1)\tau
\]
satisfies
\[
\begin{aligned}
&\lim_{\tau\to 0} \psi^D =  1,\\
&\lim_{\tau\to \infty} \psi^D = 0,\\
&\frac{d\psi^D}{d\tau} < 0,\label{H3}\\
&\lim_{\tau\to 0} \frac{d\psi^D}{d\tau} = -\frac{\mu r\pi}{2},\\
&\lim_{\tau\to \infty} \frac{d\psi^D}{d\tau} = 0.
\end{aligned}
\]
\end{lem}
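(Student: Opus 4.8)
The plan is to reduce all five claims to elementary properties of the single-variable function $g(X) := (1-e^{-X})/X$, where $X := \alpha r\pi = (\sigma-1)r\pi\tau$ is a linear increasing function of $\tau$ with $dX/d\tau = (\sigma-1)r\pi > 0$. Writing $\beta := \mu/(\sigma-1) > 0$, we have $\psi^D = g(X)^\beta$, so everything follows once $g$ and $g'$ are understood. First I would record that $g(X) > 0$ for $X > 0$, together with the two boundary limits $\lim_{X\to 0}g(X) = 1$ (via the expansion $1 - e^{-X} = X - X^2/2 + \cdots$, or L'Hôpital) and $\lim_{X\to\infty}g(X) = 0$. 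Raising to the power $\beta$ and using continuity of $t\mapsto t^\beta$ on $(0,\infty)$ immediately yields the first two limits $\lim_{\tau\to 0}\psi^D = 1$ and $\lim_{\tau\to\infty}\psi^D = 0$.

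For the monotonicity, I would differentiate explicitly:
\[
g'(X) = \frac{(X+1)e^{-X} - 1}{X^2},
\]
and observe that the numerator is negative for every $X > 0$ because $e^X > 1 + X$, equivalently $(X+1)e^{-X} < 1$. Hence $g'(X) < 0$. By the chain rule,
\[
\frac{d\psi^D}{d\tau} = \beta\, g(X)^{\beta-1}\, g'(X)\,\frac{dX}{d\tau} = \mu r\pi\, g(X)^{\beta-1}\, g'(X),
\]
and since $g(X)^{\beta-1} > 0$ and $\mu r\pi > 0$, the sign of $d\psi^D/d\tau$ coincides with that of $g'(X)$, which is negative. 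This establishes the third claim.

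The two derivative limits then reduce to the asymptotics of the product $g(X)^{\beta-1}g'(X)$. As $\tau\to 0$ (so $X\to 0$), the expansion $g(X) = 1 - X/2 + X^2/6 - \cdots$ gives $g(X)\to 1$ and $g'(X)\to -1/2$, so $g(X)^{\beta-1}g'(X)\to -1/2$ and therefore $d\psi^D/d\tau\to \mu r\pi\cdot(-1/2) = -\mu r\pi/2$, which is the fourth claim. For $\tau\to\infty$ (so $X\to\infty$), I would use $g(X)\sim 1/X$ and $g'(X)\sim -1/X^2$, whence $g(X)^{\beta-1}g'(X)\sim -X^{-(\beta+1)}\to 0$ since $\beta + 1 > 0$; multiplying by the constant $\mu r\pi$ yields the fifth claim.

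The main obstacle I anticipate is the $\tau\to\infty$ derivative limit. Since $\beta - 1 = \mu/(\sigma-1) - 1$ can be negative, the factor $g(X)^{\beta-1}$ may diverge as $g(X)\to 0$, so one cannot simply pass to the limit factor by factor. The key is to confirm that the decay $g'(X) = O(X^{-2})$ dominates the polynomial blow-up $g(X)^{\beta-1} = O(X^{1-\beta})$, leaving the net power $X^{-(\beta+1)}$ with strictly positive exponent. Making this balancing rigorous, for instance by writing $g(X) = X^{-1}(1 - e^{-X})$ and tracking the exact powers of $X$, is the only place where care beyond routine differentiation is required.
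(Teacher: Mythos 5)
Your proposal is correct, and in fact the paper offers no proof of this lemma at all (it is dismissed as ``tedious calculations''), so your argument fills a genuine gap rather than duplicating one. The reduction to $g(X)=(1-e^{-X})/X$ with $X=(\sigma-1)r\pi\tau$ is the natural route, all five claims check out (in particular $\beta\frac{dX}{d\tau}=\mu r\pi$, and $g'(X)<0$ via $e^{X}>1+X$), and you correctly identify and resolve the one delicate point: in the $\tau\to\infty$ derivative limit the factor $g(X)^{\beta-1}$ may blow up like $X^{1-\beta}$ when $\beta<1$, but the $O(X^{-2})$ decay of $g'(X)$ leaves the net power $X^{-(\beta+1)}\to 0$.
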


\begin{lem}\label{lemm:Is}
For any $\tau>0$, the function $\chi=\chi(\tau)$ given by
\[
\chi := \ol{\phi}\int_S e^{\mu\tau|x-x^*|}dx = \frac{e^{\mu\tau r\pi}-1}{\mu\tau r\pi}
\]
satisfies
\[
\begin{aligned}
&\lim_{\tau\to 0} \chi =  1,\\
&\lim_{\tau\to \infty} \chi = \infty,\\
&\frac{d\chi}{d\tau} > 0,\\
&\lim_{\tau\to 0} \frac{d\chi}{d\tau} = \frac{\mu r\pi}{2},\\
&\lim_{\tau\to \infty} \frac{d\chi}{d\tau} = \infty.
\end{aligned}
\]
\end{lem}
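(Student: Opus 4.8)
The plan is to reduce all five assertions to elementary facts about the single-variable function $g(Y) := (e^Y - 1)/Y$ via the substitution $Y := \mu\tau r\pi$. First I would record the basic features of this substitution: $Y > 0$ whenever $\tau > 0$, $Y \to 0$ as $\tau \to 0$ and $Y \to \infty$ as $\tau \to \infty$, and $dY/d\tau = \mu r\pi > 0$ is a positive constant. (If the closed form $\chi = (e^{\mu\tau r\pi}-1)/(\mu\tau r\pi)$ itself needs justification, one passes to the angular coordinate $\theta$, uses $|x - x^*| = r|\theta|$ for $\theta \in [-\pi,\pi]$ together with $dx = r\,d\theta$, and evaluates $\ol{\phi}\int_{-\pi}^{\pi} e^{\mu\tau r|\theta|}\,r\,d\theta$ directly.) With this reduction, $\chi = g(Y)$, and the chain rule gives $d\chi/d\tau = \mu r\pi \cdot g'(Y)$, so every claim becomes a statement about $g$ or $g'$ composed with the monotone change of variable $Y = Y(\tau)$.

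For the two limits of $\chi$ (the first and second assertions), I would use the expansion $e^Y = 1 + Y + O(Y^2)$ to obtain $g(Y) \to 1$ as $Y \to 0$, and the exponential-over-linear growth $g(Y) = (e^Y-1)/Y \to \infty$ as $Y \to \infty$; composing with $Y = Y(\tau)$ yields the stated limits in $\tau$. For monotonicity (the third assertion), I would compute
\[
g'(Y) = \frac{e^Y(Y-1) + 1}{Y^2}
\]
and reduce positivity to the numerator $h(Y) := e^Y(Y-1) + 1$. The key observation is that $h(0) = 0$ while $h'(Y) = Y e^Y > 0$ for $Y > 0$, so $h$ is strictly increasing on $(0,\infty)$ and therefore $h(Y) > 0$ there; since $Y^2 > 0$ and $\mu r\pi > 0$, this gives $d\chi/d\tau > 0$.

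For the two limits of the derivative (the fourth and fifth assertions), I would expand the numerator as $h(Y) = \tfrac{Y^2}{2} + \tfrac{Y^3}{3} + O(Y^4)$, whence $g'(Y) = \tfrac12 + \tfrac{Y}{3} + O(Y^2) \to \tfrac12$ as $Y \to 0$ and hence $d\chi/d\tau \to \tfrac{\mu r\pi}{2}$; and for $Y \to \infty$ I would use $g'(Y) \sim e^Y/Y \to \infty$, giving $d\chi/d\tau \to \infty$. The only step requiring genuine care is the monotonicity argument: one should not attempt to compare the terms of $g$ term by term, but instead introduce the auxiliary function $h$ and exploit $h(0) = 0$ together with the sign of $h'$. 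The remaining items are routine limit evaluations once the substitution $Y = \mu\tau r\pi$ is in place, which is presumably what the paper means by its "tedious calculations" remark.
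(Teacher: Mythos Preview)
Your proposal is correct and complete; the substitution $Y=\mu\tau r\pi$, the derivative computation $g'(Y)=\bigl(e^Y(Y-1)+1\bigr)/Y^2$, and the auxiliary function $h(Y)=e^Y(Y-1)+1$ with $h(0)=0$, $h'(Y)=Ye^Y>0$ cleanly establish all five claims. The paper offers no detailed proof beyond the phrase ``tedious calculations yield the following,'' so your argument is exactly the kind of routine verification being suppressed there rather than a different route.
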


Lemmas \ref{lemm:dFdtsmall}-\ref{lemm:dFdtlarge} immediately follow from Lemmas \ref{lemm:Hs}-\ref{lemm:Is}.
\begin{lem}\label{lemm:dFdtsmall}
For a sufficiently small $\tau>0$,
\[\frac{dF}{d\tau}(\tau)>0.\]
\end{lem}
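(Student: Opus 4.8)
The plan is to compute $F(\tau)=1-T_A^*-\psi^D$ explicitly and then differentiate. First I would substitute the derived expression
\[
T_A^* = \frac{1-\mu}{\mu}\left[\psi^D\chi - 1\right],
\]
where $\chi=\ol{\phi}\int_S e^{\mu\tau|x-x^*|}dx$ is exactly the quantity isolated in Lemma \ref{lemm:Is}, so that $F(\tau)$ can be written purely in terms of the two scalar functions $\psi^D(\tau)$ and $\chi(\tau)$. A short rearrangement gives
\[
F(\tau) = 1 + \frac{1-\mu}{\mu} - \frac{1-\mu}{\mu}\,\psi^D\chi - \psi^D
= \frac{1}{\mu} - \psi^D\!\left(\frac{1-\mu}{\mu}\,\chi + 1\right),
\]
which reduces the whole statement to the behaviour of the product $\psi^D\chi$ and its derivative near $\tau=0$.

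Next I would differentiate $F$ with respect to $\tau$ using the product rule, expressing $\frac{dF}{d\tau}$ entirely through $\psi^D$, $\chi$, $\frac{d\psi^D}{d\tau}$, and $\frac{d\chi}{d\tau}$. Then I would take the limit $\tau\to 0$ and feed in the boundary values supplied by the two lemmas: $\psi^D\to 1$, $\chi\to 1$, $\frac{d\psi^D}{d\tau}\to -\frac{\mu r\pi}{2}$, and $\frac{d\chi}{d\tau}\to \frac{\mu r\pi}{2}$. The key cancellation I expect is that the two leading derivative terms, carrying the opposite-signed limits $\mp\frac{\mu r\pi}{2}$, combine to give a strictly positive value of $\lim_{\tau\to 0}\frac{dF}{d\tau}$ rather than cancelling to zero; the symmetry of these two limits is precisely what makes the sign determinate. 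Once $\lim_{\tau\to 0}\frac{dF}{d\tau}>0$ is established, continuity of $\frac{dF}{d\tau}$ in $\tau$ yields the existence of an interval $(0,\varepsilon)$ on which $\frac{dF}{d\tau}(\tau)>0$, which is exactly the claim.

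The main obstacle is arranging the algebra so that the $\tau\to 0$ limit of $\frac{dF}{d\tau}$ is manifestly positive rather than an indeterminate combination. Because $\psi^D\to 1$ and $\chi\to 1$ while their derivatives are equal in magnitude and opposite in sign, a careless grouping produces a difference of comparable terms where the sign is not obvious; the care lies in collecting the coefficient of $\frac{d\psi^D}{d\tau}$ versus $\frac{d\chi}{d\tau}$ and checking that, after substituting the limiting values, the weights do not conspire to annihilate the positive contribution. I would therefore keep the factors $\frac{1-\mu}{\mu}\chi+1$ and $\frac{1-\mu}{\mu}\psi^D$ intact as the coefficients multiplying $\frac{d\psi^D}{d\tau}$ and $\frac{d\chi}{d\tau}$ respectively, evaluate each at $\tau=0$ (both equal $\frac{1}{\mu}$ in the relevant combination), and verify that the resulting expression is a positive multiple of $\frac{\mu r\pi}{2}$. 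The remaining bookkeeping is routine, so I would relegate it to the lemmas already stated and present only the limit computation and the continuity argument in detail.
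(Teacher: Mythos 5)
Your overall strategy coincides with the paper's: the paper disposes of this lemma by remarking that it follows immediately from Lemmas \ref{lemm:Hs} and \ref{lemm:Is}, i.e.\ by writing $F=\frac{1}{\mu}-\frac{1-\mu}{\mu}\psi^D\chi-\psi^D$, differentiating, and inserting the limiting values at $\tau\to 0$, exactly as you propose. However, one load-bearing parenthetical in your write-up is wrong: with your grouping, the coefficient multiplying $\frac{d\psi^D}{d\tau}$ tends to $\frac{1-\mu}{\mu}\chi+1\to\frac{1}{\mu}$, while the coefficient multiplying $\frac{d\chi}{d\tau}$ tends to $\frac{1-\mu}{\mu}\psi^D\to\frac{1-\mu}{\mu}$; they are \emph{not} both equal to $\frac{1}{\mu}$. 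If they were, the opposite-signed limits $\mp\frac{\mu r\pi}{2}$ would cancel exactly, $\lim_{\tau\to 0}F'$ would be $0$, and the lemma would not follow. The correct evaluation is
\[
\lim_{\tau\to 0}\frac{dF}{d\tau}
=\frac{\mu r\pi}{2}\cdot\frac{1}{\mu}-\frac{\mu r\pi}{2}\cdot\frac{1-\mu}{\mu}
=\frac{\mu r\pi}{2}>0,
\]
so the positivity comes from the two weights differing by exactly $1$, not from the symmetry of the derivative limits. (A cleaner bookkeeping: keep $F'=-\frac{1-\mu}{\mu}\left(\psi^D\chi\right)'-\left(\psi^D\right)'$; the symmetry of the limits forces $\left(\psi^D\chi\right)'\to 0$, and the surviving term $-\left(\psi^D\right)'\to\frac{\mu r\pi}{2}$ gives the sign at once.) With that correction, your continuity argument on a right neighborhood of $\tau=0$ closes the proof.
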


\begin{lem}\label{lemm:dFdtlarge}
For a sufficiently large $\tau>0$,
\[\frac{dF}{d\tau}(\tau)<0.\]
\end{lem}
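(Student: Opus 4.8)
The plan is to make $F$'s dependence on the two scalar functions $\psi^D(\tau)$ and $\chi(\tau)$ fully explicit and then read off the sign of $dF/d\tau$ from their growth rates. Recalling from the text that
\[
T_A^* = \frac{1-\mu}{\mu}\bigl[\psi^D\chi - 1\bigr],\qquad \chi = \ol{\phi}\int_S e^{\mu\tau|x-x^*|}dx,
\]
I would first substitute this into $F = 1 - T_A^* - \psi^D$ to obtain the compact form
\[
F(\tau) = \Bigl(1+\tfrac{1-\mu}{\mu}\Bigr) - \tfrac{1-\mu}{\mu}\,\psi^D\chi - \psi^D ,
\]
so that, writing $\nu := (1-\mu)/\mu > 0$,
\[
\frac{dF}{d\tau} = -\nu\,\frac{d(\psi^D\chi)}{d\tau} - \frac{d\psi^D}{d\tau}.
\]
Since Lemma \ref{lemm:Hs} gives $d\psi^D/d\tau \to 0$ as $\tau\to\infty$, the whole question reduces to controlling the single product term $(\psi^D\chi)'$.

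The difficulty is that $d\psi^D/d\tau\to 0$ while $\chi\to\infty$, and $\psi^D\to 0$ while $d\chi/d\tau\to\infty$, so the two pieces of $(\psi^D\chi)' = (d\psi^D/d\tau)\chi + \psi^D(d\chi/d\tau)$ are each of indeterminate $0\cdot\infty$ type; the limit values recorded in Lemmas \ref{lemm:Hs}--\ref{lemm:Is} do not by themselves settle the sign. To resolve this I would pass to the logarithmic derivative: because $\psi^D,\chi>0$,
\[
\frac{d(\psi^D\chi)}{d\tau} = \psi^D\chi\left(\frac{1}{\psi^D}\frac{d\psi^D}{d\tau} + \frac{1}{\chi}\frac{d\chi}{d\tau}\right).
\]
Using the closed forms $\psi^D=[(1-e^{-X})/X]^{\mu/(\sigma-1)}$ with $X=(\sigma-1)r\pi\tau$ and $\chi=(e^{\mu r\pi\tau}-1)/(\mu r\pi\tau)$, a direct computation shows that the first logarithmic derivative vanishes: $\psi^{D\prime}/\psi^D$ behaves like $-\mu/((\sigma-1)\tau)$, the $e^{-X}$ contribution being exponentially small so that only the $-1/X$ term survives. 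The second tends to the positive constant $\chi'/\chi \to \mu r\pi$, since $\mu r\pi\, e^{\mu r\pi\tau}/(e^{\mu r\pi\tau}-1)\to\mu r\pi$ dominates the vanishing $-1/\tau$. Hence the bracketed factor converges to $\mu r\pi>0$.

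It then remains to note that $\psi^D\chi\to+\infty$, because the exponential growth of $\chi$ beats the merely polynomial decay $\psi^D\sim C\,\tau^{-\mu/(\sigma-1)}$. Combining the two facts, $(\psi^D\chi)' = \psi^D\chi\cdot(\mu r\pi + o(1))\to +\infty$, so $-\nu(\psi^D\chi)'\to-\infty$ while $-d\psi^D/d\tau\to 0^+$, yielding $dF/d\tau\to-\infty$ and in particular $dF/d\tau<0$ for all sufficiently large $\tau$. The main obstacle is precisely the indeterminacy just described: one cannot conclude from the limit values alone and must weigh the exponential rate carried by $\chi$ against the polynomial rate carried by $\psi^D$. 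The logarithmic-derivative reformulation is what turns this comparison into the clean statement that the growth rate of the product converges to $\mu r\pi>0$, after which the sign of $dF/d\tau$ is immediate.
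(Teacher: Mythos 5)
Your proof is correct, and it is in fact more careful than what the paper offers: the paper simply asserts that Lemmas \ref{lemm:dFdtsmall}--\ref{lemm:dFdtlarge} ``immediately follow'' from Lemmas \ref{lemm:Hs}--\ref{lemm:Is}, without writing anything down. You correctly identify why the claim is \emph{not} immediate from the recorded limit values alone: in $\frac{dF}{d\tau} = -\frac{1-\mu}{\mu}(\psi^D\chi)' - (\psi^D)'$ the product rule produces the terms $(\psi^D)'\chi$ and $\psi^D\chi'$, each of indeterminate $0\cdot\infty$ type as $\tau\to\infty$, so the sign of $(\psi^D\chi)'$ requires an actual rate comparison. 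Your resolution via the logarithmic derivative --- showing $(\psi^D)'/\psi^D \sim -\mu/((\sigma-1)\tau)\to 0$ while $\chi'/\chi\to\mu r\pi>0$, and $\psi^D\chi\to\infty$ because exponential growth of $\chi$ dominates the polynomial decay of $\psi^D$ --- is sound, and the conclusion $dF/d\tau\to-\infty$ is stronger than what the lemma asks for. The only point worth adding is that within the paper's own architecture the lemma is also derivable without any asymptotics: Lemma \ref{lemm:d2Fdt2} gives $F''<0$ everywhere, so $F'$ is strictly decreasing, and if $F'$ stayed nonnegative for all $\tau$ then $F$ would be nondecreasing from $\lim_{\tau\to 0}F=0$, contradicting $\lim_{\tau\to\infty}F<0$ from Lemma \ref{lemm:limitF}; that route buys economy at the cost of leaning on the (unproved-in-detail) concavity computation, whereas yours is self-contained given the closed forms of $\psi^D$ and $\chi$.
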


A bit of technical discussion is required to prove Lemma \ref{lemm:d2Fdt2}.
\begin{lem}\label{lemm:d2Fdt2}
For any $\tau>0$,
\[
\frac{d^2F}{d\tau^2}(\tau) < 0.
\]
\end{lem}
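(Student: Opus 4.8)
\emph{Proof plan.} The plan is to collapse $F$ into a single product whose strict convexity is the entire content of the lemma, and then to recognise that product as built from moment generating functions, which are automatically log-convex. Using $T_A^* = \frac{1-\mu}{\mu}\bigl(\psi^D\chi - 1\bigr)$ with $\chi = \ol{\phi}\int_S e^{\mu\tau|x-x^*|}dx$ (Lemma \ref{lemm:Is}), the definition $F(\tau) = 1 - T_A^* - \psi^D$ rearranges to
\[
F(\tau) = \frac{1}{\mu}\Bigl[\,1 - \psi^D\bigl((1-\mu)\chi + \mu\bigr)\Bigr].
\]
Writing $P(\tau) := \psi^D\bigl((1-\mu)\chi + \mu\bigr)$, we then have $F'' = -\tfrac{1}{\mu}P''$, so the lemma is equivalent to the strict convexity of $P$ on $(0,\infty)$.

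Next I would expose the two factors of $P$ as (powers of) moment generating functions in the single scaling variable $\tau$. Setting $s := (\sigma-1)r\pi\,\tau$ and $u := \mu r\pi\,\tau$, both linear in $\tau$, one has
\[
\psi^D = \left[\int_0^1 e^{-st}\,dt\right]^{\frac{\mu}{\sigma-1}}, \qquad (1-\mu)\chi + \mu = \mu + (1-\mu)\int_0^1 e^{ut}\,dt,
\]
the first because $\frac{1-e^{-s}}{s} = \int_0^1 e^{-st}\,dt$ and the second because $\chi = \int_0^1 e^{ut}\,dt$. Thus the base of $\psi^D$ is the moment generating function of the uniform law on $[0,1]$ evaluated at $-s$, while $(1-\mu)\chi + \mu$ is exactly the moment generating function, at $u$, of the probability measure $\nu$ that puts mass $\mu$ at $0$ and mass $1-\mu$ on the uniform law on $[0,1]$.

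The key structural fact is that the logarithm of a moment generating function is convex: for $M(\theta) = \int e^{\theta t}\,d\nu(t)$ one has $(\log M)'' = \bigl(M''M - (M')^2\bigr)/M^2 \geq 0$ by the Cauchy--Schwarz inequality applied to $\int t^2 e^{\theta t}\,d\nu$, $\int e^{\theta t}\,d\nu$ and $\int t\,e^{\theta t}\,d\nu$, with strict inequality whenever $\nu$ is non-degenerate. Applying this to the two representations above, and using that $s,u$ are linear in $\tau$ and the exponent $\mu/(\sigma-1)>0$, shows that $\log\psi^D$ is \emph{strictly} convex in $\tau$ (the uniform law being non-degenerate) and that $\log\bigl((1-\mu)\chi+\mu\bigr)$ is convex in $\tau$. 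Hence $\log P = \log\psi^D + \log\bigl((1-\mu)\chi+\mu\bigr)$ is strictly convex, and therefore $P = e^{\log P}$ is strictly convex, since $P'' = P\bigl[(\log P)'' + ((\log P)')^2\bigr] > 0$. Consequently $F'' = -P''/\mu < 0$ for every $\tau>0$.

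The main obstacle, and presumably the ``technical discussion'' alluded to, is the convexity of the cumulant generating function together with the bookkeeping needed to route the $\tau$-dependence through the two linear variables $s$ and $u$ and the fixed positive exponent. A purely computational alternative---differentiating $\psi^D$ and $\chi$ twice and trying to sign the resulting combination of $\psi^{D\prime\prime}$, $\chi''$ and the cross term $\psi^{D\prime}\chi'$ directly---is feasible but delicate: a product of convex functions need not be convex, and here the cross term carries the ``wrong'' sign, since $\psi^{D\prime}<0<\chi'$, so the positive contributions $\psi^{D\prime\prime}\bigl((1-\mu)\chi+\mu\bigr)$ and $(1-\mu)\psi^D\chi''$ must be shown to dominate it. The log-convexity reformulation is precisely what makes that cancellation automatic, which is why I would adopt it rather than grind through the second derivatives.
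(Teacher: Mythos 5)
Your proposal is correct, and it reaches the conclusion by the same underlying mechanism as the paper --- the observation that $(\ln f)^{\prime\prime}>0$ together with $f>0$ forces $f^{\prime\prime}>0$ --- but it packages the argument differently and, more importantly, actually supplies the log-convexity that the paper only asserts. The paper keeps the decomposition $F^{\prime\prime}=-\frac{1-\mu}{\mu}(\psi^D\chi)^{\prime\prime}-{\psi^D}^{\prime\prime}$ and disposes of the two terms separately, each time invoking a ``tedious calculation'' to get $(\ln \psi^D\chi)^{\prime\prime}>0$ and $(\ln\psi^D)^{\prime\prime}>0$. You instead collapse $F$ into $\frac{1}{\mu}\bigl[1-\psi^D\bigl((1-\mu)\chi+\mu\bigr)\bigr]$ (the algebra checks out) and prove strict log-convexity of the single product $P=\psi^D\bigl((1-\mu)\chi+\mu\bigr)$ by recognising $\frac{1-e^{-s}}{s}=\int_0^1 e^{-st}dt$ and $\mu+(1-\mu)\chi=\int e^{ut}d\nu(t)$ as moment generating functions, whose log-convexity follows from Cauchy--Schwarz, with strictness from non-degeneracy of the uniform law; linearity of $s,u$ in $\tau$ and positivity of the exponent $\mu/(\sigma-1)$ then transport this to $\tau$. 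This buys two things over the paper's version: the ``tedious calculations'' are replaced by a one-line structural fact (and your machinery in fact also yields the paper's two claims, since $\psi^D$ and $\chi$ are each log-convex and products of log-convex functions are log-convex), and the source of the convexity is made transparent. The only caveat is cosmetic: what you wrote is a plan, so a final write-up should display the Cauchy--Schwarz step and the chain rule bookkeeping explicitly, but no mathematical gap remains.
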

\begin{proof}
Let $f^\prime$ denote the first derivative of any function $f$. To estimate the sign of the second derivative $f^{\prime\prime}$ of $f$, it is convenient to use the fact that
\[
(\ln f)^{\prime\prime} = \frac{f^{\prime\prime}f-(f^\prime)^2}{f^2}.
\]
It follows that if $(\ln f)^{\prime\prime}>0$, then $f^{\prime\prime}f>(f^\prime)^2$ holds. Therefore, if $f>0$, then we have $f^{\prime\prime} > 0$.

The second derivative of $F$ is
\begin{equation}\label{Fdd}
F^{\prime\prime}
= -\frac{1-\mu}{\mu}\left(\psi^D\chi\right)^{\prime\prime} - {\psi^D}^{\prime\prime}.
\end{equation}
Tedious calculation shows that $\left(\ln \psi^D\chi\right)^{\prime\prime} > 0$. Since $\psi^D\chi>0$, we have \begin{equation}\label{HIdd}
(\psi^D\chi)^{\prime\prime} >0.
\end{equation}
Another tedious calculation gives $\left(\ln \psi^D\right)^{\prime\prime} > 0$, and we have
\begin{equation}\label{Hdd}
{\psi^D}^{\prime\prime} > 0
\end{equation}
by the same manner. As a result, by considering \eqref{HIdd} and \eqref{Hdd} together with \eqref{Fdd}, we get $F^{\prime\prime} < 0$.
\end{proof}

It is a simple matter to see Lemma \ref{lemm:limitF} holds.
\begin{lem}\label{lemm:limitF}
\leavevmode
\vspace{-1mm}
\begin{enumerate}
\item \[ \lim_{\tau\to 0} F(\tau) = 0, \]
\item \[ \lim_{\tau\to \infty} F(\tau) < 0. \]
\end{enumerate}
\end{lem}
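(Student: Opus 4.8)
The plan is to first reduce $F$ to a clean closed form in terms of the two auxiliary functions $\psi^D$ and $\chi$ already analyzed in Lemmas \ref{lemm:Hs} and \ref{lemm:Is}. Substituting the expression
\[
T_A^* = \frac{1-\mu}{\mu}\left[\psi^D\chi - 1\right]
\]
into $F = 1 - T_A^* - \psi^D$ and collecting the constant terms yields
\[
F(\tau) = \frac{1}{\mu} - \frac{1-\mu}{\mu}\,\psi^D\chi - \psi^D .
\]
This single identity drives both parts of the lemma, so I would derive it first.

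For part (1), I would simply pass to the limit $\tau\to 0$. By Lemma \ref{lemm:Hs} we have $\psi^D\to 1$, and by Lemma \ref{lemm:Is} we have $\chi\to 1$, so both $\psi^D$ and the product $\psi^D\chi$ tend to $1$. Substituting into the closed form gives $\frac{1}{\mu} - \frac{1-\mu}{\mu} - 1 = \frac{\mu}{\mu} - 1 = 0$, which is the claim. This step is immediate and needs no estimation.

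Part (2) is where the only genuine work lies, and it is also the main obstacle. Here $\psi^D\to 0$ while $\chi\to\infty$, so the product $\psi^D\chi$ is an indeterminate form of type $0\cdot\infty$ and cannot be evaluated by multiplying the separate limits. The key is to compare the competing rates: as $\tau\to\infty$, $\chi = \frac{e^{\mu\tau r\pi}-1}{\mu\tau r\pi}$ grows \emph{exponentially} in $\tau$, whereas $\psi^D = \left[\frac{1-e^{-\alpha r\pi}}{\alpha r\pi}\right]^{\mu/(\sigma-1)}$ decays only like a fixed power of $\tau$, since the bracket behaves like $[(\sigma-1)\tau r\pi]^{-1}$. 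I would make this precise by noting that for large $\tau$ one has $1-e^{-\alpha r\pi}\geq \tfrac12$ and $e^{\mu\tau r\pi}-1\geq \tfrac12 e^{\mu\tau r\pi}$, which furnishes a lower bound of the form
\[
\psi^D\chi \;\geq\; c\,\frac{e^{\mu\tau r\pi}}{\tau^{\,1+\mu/(\sigma-1)}}
\]
for a positive constant $c$. Because the exponential dominates every power of $\tau$, this forces $\psi^D\chi\to\infty$. Consequently the middle term $-\frac{1-\mu}{\mu}\psi^D\chi$ diverges to $-\infty$ (recall $\frac{1-\mu}{\mu}>0$) while the remaining terms stay bounded, indeed $\psi^D\to 0$, so $F(\tau)\to -\infty$ and in particular $\lim_{\tau\to\infty}F(\tau)<0$. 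The one thing to guard against is reading the product off from the individual limits; the asymptotic lower bound above, pitting exponential growth against polynomial decay, is what resolves the indeterminacy.
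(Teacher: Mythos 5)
Your proof is correct. The paper gives no argument for this lemma at all (it is dismissed as ``a simple matter''), and your computation --- rewriting $F$ as $\frac{1}{\mu}-\frac{1-\mu}{\mu}\psi^D\chi-\psi^D$, reading off part (1) from the limits in Lemmas \ref{lemm:Hs}--\ref{lemm:Is}, and resolving the $0\cdot\infty$ form in part (2) by the exponential-growth-versus-polynomial-decay bound on $\psi^D\chi$ --- is precisely the calculation the author leaves implicit; your identification of the indeterminate product as the one point requiring care is apt. Note only that you establish the stronger conclusion $F(\tau)\to-\infty$, which of course implies the stated $\lim_{\tau\to\infty}F(\tau)<0$.
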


Then Theorem \ref{Fgtzero} obviously follows from Lemmas \ref{lemm:dFdtsmall}-\ref{lemm:limitF}. Figure \ref{fig:F} shows the sketch\footnote{It does not plot the numerically exact values of the function.} of the graph of $F(\tau)$.

\begin{figure}[H]
\centering
\begin{tikzpicture}
 \draw[name path=xaxis,->,thin] (0,0)--(5.0,0)node[above]{$\tau$};
 \draw[->,thin] (0,-2.5)--(0,2.5)node[right]{$$};
 \draw (0,0)node[below right]{O};
 \draw[name path=Omega1,blue,thin,domain=0:4.5] plot(\x,{-0.35*pow(\x,2)+\x})node[right]{$F(\tau)$};
 \path[name intersections={of=Omega1 and xaxis}];
 \fill[black] (intersection-2) circle (0.07) node[above right]{$\tau_K$};
\end{tikzpicture}
\caption{Sketch of the graph of $F(\tau)$}
\label{fig:F}
\end{figure}
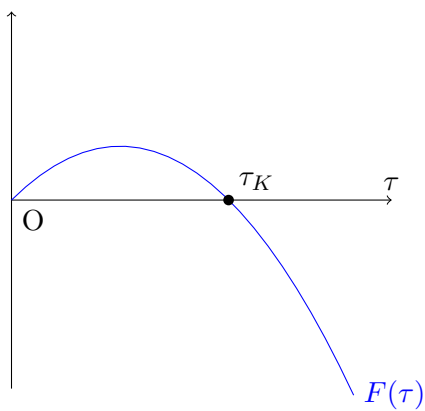

\qed

\bibliographystyle{aer}

\ifx\undefined\bysame
\newcommand{\bysame}{\leavevmode\hbox to\leftmargin{\hrulefill\,\,}}
\fi

\end{document}